\let\@secnumfont\bfseries
\def\section{\@startsection{section}{1}%
  \z@{4\linespacing\@plus\linespacing}{\linespacing}%
  {\bfseries\centering}}
\def\introsection{\@startsection{section}{1}%
  \z@{3\linespacing\@plus\linespacing}{\linespacing}%
  {\bfseries\centering}}
\def\subsection{\@startsection{subsection}{2}%
   \z@{1.25\linespacing\@plus.7\linespacing}{.5\linespacing}%
   {\normalfont\bfseries}}
\def\subsectionsinline{\def\subsection{\@startsection{subsection}{2}%
  \z@{1\linespacing\@plus.7\linespacing}{-.5em}%
  {\normalfont\bfseries}}}
\def\noremarks{\def\rem##1{\relax}\def\refrem##1{\relax}}
\theoremstyle{definition}
\newtheorem{example}[equation]{Example}
\newtheorem*{definition*}{Definition}
\newtheorem*{example*}{Example}
\newtheorem*{problem*}{Problem}
\newtheorem*{exercise*}{Exercise}
\newtheorem*{question*}{Question}
\newtheorem*{construction*}{Construction}
\theoremstyle{remark}
\newtheorem{remark}[equation]{Remark}
\newtheorem*{note*}{Note}
\newtheorem*{notation*}{Notation}
\newtheorem*{remark*}{Remark}
\newtheorem*{data*}{Data}
\theoremstyle{plain}
\newtheorem{theorem}[equation]{Theorem}
\newtheorem{corollary}[equation]{Corollary}
\newtheorem{proposition}[equation]{Proposition}
\newtheorem{assumption}[equation]{Assumption}
\newtheorem*{theorem*}{Theorem}
\newtheorem*{corollary*}{Corollary}
\newtheorem*{lemma*}{Lemma}
\newtheorem*{proposition*}{Proposition}
\newtheorem*{conjecture*}{Conjecture}
\newtheorem*{claim*}{Claim}
\newtheorem*{proposal*}{Proposal}
\newtheorem*{conclusion*}{Conclusion}
\newtheorem*{hypothesis*}{Hypothesis}
\newtheorem*{assumption*}{Assumption}
\numberwithin{equation}{section}
\newcommand{\rem}[1]{\marginpar%
  [$\bf\color{blue}\Rightarrow$]{$\bf\color{blue}\Leftarrow$}[{\tiny 
\color{blue}\bf #1}]} 
\definecolor{refkey}{rgb}{0,.6,.4}
\newcommand{\refrem}[1]{\marginpar%
  [$\bf\color{refkey}\Rightarrow$]{$\bf\color{refkey}\Leftarrow$}[{\tiny 
\color{refkey} \bf #1}]} 
\renewcommand{\:}{\colon}
\DeclareMathOperator{\Aut}{Aut}
\newcommand{\CC}{{\mathbb C}}
\DeclareMathOperator{\End}{End}
\DeclareMathOperator{\id}{id}
\newcommand{\PP}{{\mathbb P}}
\DeclareMathOperator{\pt}{pt}
\newcommand{\RR}{{\mathbb R}}
\newcommand{\TT}{\mathbb T}
\DeclareMathOperator{\Spin}{Spin}
\newcommand{\ZZ}{{\mathbb Z}}
\newcommand{\chiup}{\raise.5ex\hbox{$\chi$}}
\newcommand{\cir}{S^1}
\newcommand{\inv}{^{-1}}
\DeclareRobustCommand{\mstrut}{^{\vphantom{1*\prime y\vee M}}}
\newcommand{\temsquare}{\raise3.5pt\hbox{\boxed{ }}}
\newcommand{\zmod}[1]{\ZZ/#1\ZZ}
\newcommand{\zt}{\zmod2}
\renewcommand{\cir}{\ensuremath{S^1}}
\definecolor{refkey}{rgb}{0,.8,.2}\definecolor{labelkey}{rgb}{1,0,0} 
\DeclareSymbolFont{bbold}{U}{bbold}{m}{n}
\DeclareSymbolFontAlphabet{\mathbbold}{bbold}
\DeclareMathOperator{\Arf}{Arf}
\DeclareMathOperator{\Bord}{Bord}
\DeclareMathOperator{\Cliff}{Cliff}
\DeclareMathOperator{\Det}{Det}
\DeclareMathOperator{\Pfaff}{Pfaff}
\DeclareMathOperator{\Vect}{Vect}
\DeclareMathOperator{\pfaff}{pfaff}
\newcommand{\BFp}{\Bord_{\langle n-1,n,n+1  \rangle}(\sF')}
\newcommand{\BnFp}{\Bord_{\langle n-1,n  \rangle}(\sF')}
\newcommand{\BnF}{\Bord_{\langle n-1,n  \rangle}(\sF)}
\newcommand{\Co}{\Cliff^{\CC}_1}
\newcommand{\GL}{\Gamma \!\mstrut _{L}}
\newcommand{\IRZ}{I_{\RZ}}
\newcommand{\RRZ}{R_{\RZ}}
\newcommand{\RZ}{\RR/\ZZ}
\newcommand{\Sbord}{\Spin\!\Bord}
\newcommand{\Vi}{\Vect_{\textnormal{top}}}
\newcommand{\bW}{\overline{W}}
\newcommand{\cV}{[V]\raise1pt\hbox{\,$\widecheck{\phantom{c}}$}}
\newcommand{\cW}{[W]\raise1pt\hbox{\,$\widecheck{\phantom{c}}$}}
\newcommand{\eff}{\textnormal{eff}}
\newcommand{\sC}{\mathcal{C}}
\newcommand{\sF}{\mathscr{F}}
\newcommand{\sH}{\mathscr{H}}
\newcommand{\sL}{\mathcal{L}}
\newcommand{\sM}{\mathcal{M}}
\newcommand{\sP}{\mathscr{P}}
\newcommand{\triv}{\mathbbold{1}}
\newcommand{\unit}{\bold{1}}
\begin{document}

\abovedisplayskip18pt plus4.5pt minus9pt
\belowdisplayskip \abovedisplayskip
\abovedisplayshortskip0pt plus4.5pt
\belowdisplayshortskip10.5pt plus4.5pt minus6pt
\baselineskip=15 truept
\marginparwidth=55pt

\makeatletter
\renewcommand{\tocsection}[3]{%
  \indentlabel{\@ifempty{#2}{\hskip1.5em}{\ignorespaces#1 #2.\;\;}}#3}
\renewcommand{\tocsubsection}[3]{%
  \indentlabel{\@ifempty{#2}{\hskip 2.5em}{\hskip 2.5em\ignorespaces#1%
    #2.\;\;}}#3} 
\makeatother

\renewcommand{\labelenumi}{\textnormal{(\roman{enumi})}}
\setcounter{tocdepth}{2}




 \title[Anomalies and invertible field theories]{Anomalies and Invertible Field Theories} 
 \author[D. S. Freed]{Daniel S.~Freed}
 \thanks{The work of D.S.F. is supported by the National Science Foundation
under grant DMS-1207817.  This work was supported in part by the National
Science Foundation under Grant No. PHYS-1066293 and the hospitality of the
Aspen Center for Physics.} 
 \address{Department of Mathematics \\ University of Texas \\ Austin, TX
78712} 
 \email{dafr@math.utexas.edu}
 \date{June 20, 2014}
 \begin{abstract} 
 We give a modern geometric viewpoint on anomalies in quantum field theory
and illustrate it in a 1-dimensional theory: supersymmetric quantum
mechanics.  This is background for the resolution of worldsheet anomalies in
orientifold superstring theory.
 \end{abstract}
\maketitle


{\small
\tableofcontents}

   \section{Introduction}\label{sec:1}

The subject of anomalies in quantum field theories is an old one, and it is
well-trodden.  There is a huge physics literature on this topic of anomalies,
for which one entree is~\cite{Be}.  Important work in the early
1980s~\cite{AS1,AgW,AgG,ASZ} tied the study of \emph{local} anomalies to the
Atiyah-Singer \emph{topological} index theorem, and extensions to
\emph{global} anomalies~\cite{W1,W2} were not far behind.  These ideas were
quickly fit in to \emph{geometric} invariants in index theory, such as the
determinant line bundle and the $\eta $-invariant.  Indeed, many
developments in geometric index theory at that time were directly motivated
by the physics.  A geometric picture of anomalies emerged from this
interaction~\cite[\S1]{F1}.
 
One impetus to reconsider the settled canon on anomalies is a rather sticky
enigma: worldsheet anomalies in Type~II superstring orientifolds.  That was
the subject of my lecture at String-Math~2013, and it will be elaborated
elsewhere.  Here we take the opportunity to introduce a modern geometric
viewpoint on anomalies~(\S\ref{sec:2}), to illustrate it in a simpler
theory~(\S\ref{sec:3}), and to introduce some topology which is crucial in
resolving worldsheet orientifold anomalies~(\S\ref{sec:4}).

The modern point of view rests on the observation that \emph{the anomaly
itself is a quantum field theory}.  It should be expected that anomalies,
which are computed as part of a quantum field theory, obey the locality
principles of quantum field theory.  The anomaly is a very special type of
theory: it is \emph{invertible}.  If in addition an invertible theory is
\emph{topological}, then it reduces to a map of spectra in the sense of
stable algebraic topology.  This presents us with the opportunity to employ
more sophisticated topological arguments.  We remark that an anomalous
quantum field theory is a \emph{relative} quantum field theory~\cite{FT},
related to the anomaly.
 
The simpler theory we revisit here is supersymmetric quantum mechanics~(QM)
with a single supersymmetry.  It was used in the 1980s to give a physics
derivation of the Atiyah-Singer index theorem.  This physical system
describes a particle moving in a Riemannian manifold~$X$.  The quantum
operator which represents the single supersymmetry is the basic Dirac
operator on~$X$, whose definition requires a spin structure.  In the physics
a spin structure is required to cancel an anomaly in the quantization of the
fermionic field.  This is technically much simpler if we assume that $X$~is
even-dimensional and oriented, which we do in~\S\ref{sec:3}.
In~\S\ref{sec:5} we analyze the anomaly without that simplifying assumption.
One consequence is that if $X$~is odd-dimensional, it is most natural to
consider the Hilbert space of the theory to be a module over a complex
Clifford algebra with an odd number of generators.  This is well-known in
differential geometry in the Clifford linear Dirac operator
construction~\cite{LM}, and it seems natural for the physics as well.  (See
Remark~\ref{thm:15}.)
 
I warmly thank Jacques Distler, Greg Moore, Mike Hopkins, and Constantin
Teleman for many years of fruitful collaboration and discussions on topics
related to this paper.  I also thank the referee for his/her careful reading
and useful suggestions.

   \section{Anomalies}\label{sec:2}

The reader may wish to consult previous expositions of anomalies
in~\cite{F1}, \cite{F2}, and~\cite{FM}.

  \subsection{Fields and field theories: formal view}\label{subsec:2.1}

An $n$-dimensional\footnote{Here $n$~is the spacetime dimension.  For
supersymmetric quantum mechanics we have~$n=1$.} quantum field theory
  \begin{equation}\label{eq:1}
      Z\:\BnF\longrightarrow \Vi
  \end{equation}
is, formally, a functor from a geometric bordism category of $(n-1)$- and
$n$-dimensional manifolds with fields~$\sF$ to the category of complex
topological vector spaces.  Unraveling this definition we find that to a
closed $n$-manifold---that is, a compact manifold without boundary---the
theory assigns a number~$Z(M)\in \CC$, the \emph{partition function}.  To a
closed $(n-1)$-manifold~$N$ is attached a topological vector space~$Z(N)$,
often called the \emph{quantum Hilbert space}.  The Hilbert space inner
product exists if $Z$~is unitary.\footnote{and if we assume a
\emph{symmetric} formal $n$-dimensional tubular neighborhood of~$N$ is
given.}  Compact $n$-dimensional bordisms map under~$Z$ to continuous linear
maps.  For example, if $M$~is a closed $n$-manifold, and $B_1\cup\cdots\cup
B_r\subset M$ a disjoint union of open $n$-balls, then
  \begin{equation}\label{eq:6}
     Z(M\setminus B_1\cup\cdots\cup B_r)\: Z(S_1)\otimes \cdots\otimes
     Z(S_r)\longrightarrow \CC 
  \end{equation}
encodes correlation functions of local operators, where $S_j=\partial B_j$
and all boundaries are incoming in the bordism.  (In a general quantum field
theory we take a limit as the radii of the balls shrink to zero.)
See~\cite{Se1} for a recent exposition of this geometric definition of
quantum field theory, due to Segal.

The fields~`$\sF$' in~\eqref{eq:1} are, from the point of view of the
theory~$Z$, \emph{background} fields; any \emph{fluctuating} fields have
already been integrated out.  Formally, fields are a simplicial sheaf~$\sF$
on the category of $n$-manifolds and local diffeomorphisms.  Fix a closed
$n$-manifold~$M$.  Then the fields~ $\sF(M )$ on~$M $ form an iterated fiber
bundle.  There are topological fields (orientations, spin structures,
framings, etc.) and geometric fields (scalar fields, metrics, connections,
spinor fields, etc.)  The definition of some fields depends on other fields
(e.g., a spinor field depends on a metric and spin structure), which is why
$\sF(M )$~is an iterated fibration and not a Cartesian product.  Some fields
have internal symmetries, and so $\sF(M)$ is typically an infinite
dimensional \emph{higher stack}.  Examples of fields with internal symmetries
include spin structures, connections (gauge fields), and higher gauge fields
such as the $B$-field in string theory.  The sheaf condition encodes the
locality of fields and allows the construction of a bordism category with an
arbitrary collection of fields.  The manifolds~$M,N,B_j,S_j$ in the previous
paragraph and going forward are assumed endowed with fields, though the
fields are not always made explicit in the notation.

A field theory~$\alpha $ is \emph{invertible} if for every closed
$(n-1)$-manifold ~$N$ with fields the vector space~$\alpha (N)$ is a line and
if for each $n$-dimensional bordism $M\:N_0\to N_1$ with fields the linear
map $\alpha (M)\:\alpha (N_0)\to \alpha (N_1)$ is invertible.  In particular,
the partition function $\alpha (M)\in \CC$ of a closed $n$-manifold is
nonzero.  The natural algebraic operation on field theories is
multiplication---tensor product of the quantum vector spaces and numerical
product of the partition functions---and `invertibility' refers to that
operation.  For example, the vector space~$\CC$ is the identity under tensor
product of vector spaces, and a vector space~$V$ has an inverse~$V'$---i.e.,
there exists an isomorphism $V\otimes V'\xrightarrow{\;\cong \;}\CC$---if and
only if~$\dim V=1$.

A \emph{lagrangian theory} is specified by a collection of
fields~$\sF$---both background and fluctuating---and, for each
$n$-manifold~$M$, a function
  \begin{equation}\label{eq:2}
     A=A(M )\:\sF(M )\longrightarrow \CC
  \end{equation}
called the \emph{exponentiated action}.  Note that despite the name, there is
not necessarily a well-defined action which would be its logarithm.   

  \begin{example}[]\label{thm:1}
 In supersymmetric~QM with values in a fixed Riemannian manifold~$X$, the
manifold~ $M$~is 1-dimensional and $\sF(M)$~consists of 4~fields: a metric
on~$M$, a spin structure on~$M$, a smooth map $\phi \:M\to X$, and a spinor
field~$\psi $ on~$M$ with values in~$\phi ^*TX$.  The metric, spin structure,
and $\phi $~are independent of each other, but we need all three to define
the space of spinor fields~$\psi $.  Also, the fermionic field~$\psi $ is odd
in the sense of supermanifolds~\cite{DM}, so the exponentiated
action~\eqref{eq:2} is not really a complex-valued function on fields, but as
we only consider bosonic fields in the sequel we do not dwell on this.
  \end{example}

 If the fields~$\sF$ include fermionic fields, as in supersymmetric~QM, then
there is an odd vector bundle $\sF\to\sF'$ with fibers the fermionic fields
and base the bosonic fields.  The fermionic fields can be integrated out to
give a theory with only bosonic fields~$\sF'$.  Each fermionic path integral
contributes the pfaffian of a Dirac operator to the effective exponentiated
action~$A_{\eff}=A_{\eff}(M )$ on~$\sF'(M )$.  The pfaffian may vanish, so
$A_{\eff}$~is not necessarily an invertible theory.  The Feynman procedure next
calls for integration of~$A_{\eff}$ over the bosonic fields~$\sF'(M )$, and
this brings in all the analytic interest of quantum field theory: one needs
to construct a well-defined measure on~$\sF'(M )$ to define the integral.

  \subsection{Anomalies: traditional view}\label{subsec:2.2}

The anomaly is a geometric, rather than analytic, obstruction to
integrating~$A_{\eff}$ over~$\sF'(M )$.  Namely, it may happen
that rather than a global function, the effective exponentiated
action~$A_{\eff}$ is a section of a complex line bundle
  \begin{equation}\label{eq:3}
     \alpha (M )\longrightarrow \sF'(M ). 
  \end{equation}
Furthermore, in a unitary theory $\alpha (M )$~ carries a hermitian metric
and compatible covariant derivative.  Typically $\alpha (M)$~is a tensor
product of more primitively defined line bundles.  For example, if
$A_{\eff}$~is obtained by integrating out fermionic fields, then some factors
of~$\alpha (M )$ are Pfaffian line bundles of families of Dirac operators
parametrized by~$\sF'(M )$.  To obtain a function to formally integrate
over~$\sF'(M )$ we require a \emph{setting of the quantum integrand}, a
section~$\triv$ of~\eqref{eq:3} which we demand be flat and have unit norm.
Then the desired quantum integrand is the ratio~$A_{\eff}/\triv$.

From this lagrangian point of view, the anomaly is the obstruction to the
existence of~$\triv$.  The \emph{local anomaly} is the curvature
of~\eqref{eq:3}; if the curvature vanishes, the \emph{global anomaly} is the
holonomy.  If all holonomies are trivial, then the local and global anomalies
vanish.  Vanishing holonomy implies the existence of~$\triv$, though $\triv$~
is unique only up to a phase on each component of~$\sF'(M)$.  Said
differently, the set of trivializations on each component is a torsor over
the circle group of unit norm complex numbers.
 
There is also a hamiltonian point of view on anomalies~\cite{Se2}, \cite{Fa},
\cite{NAg}.  To an\footnote{An object of~$\BnFp$ is really the germ of an
$n$-manifold neighborhood of~$N$ and the fields are defined on that
neighborhood.}  $(n-1)$-dimensional manifold~$N$ a \emph{non-anomalous} field
theory assigns a fiber bundle over~$\sF'(N)$ whose fibers are complex
topological \emph{vector} spaces.  In an \emph{anomalous} theory~$F$ the
fibers of the bundle
  \begin{equation}\label{eq:4}
     F(N)\longrightarrow \sF'(N) 
  \end{equation}
are rather complex \emph{projective} spaces.  This is in line with
expectations in quantum mechanics: the space of pure states in a quantum
system is a complex projective space.  ``Integrating'' over bosonic fields,
or canonical quantization, involves taking $L^2$~sections of a vector
bundle.\footnote{In fact, one takes sections over the space of classical
solutions which are flat along some polarization, but here we only focus on
the formal geometric difficulty to do with projectivity of the fibers, not
the polarization.}  Again there is an analytic difficulty---construct a
measure on the space of bosonic fields---and a geometric difficulty---lift
the projective bundle to a vector bundle.  The obstruction
  \begin{equation}\label{eq:18}
     \alpha (N) \longrightarrow \sF'(N)
  \end{equation}
to the existence of a lift is the anomaly.  Topologically, this obstruction
is a twisting of complex $K$-theory, or a gerbe (see~\cite{FHT,ASe}, for
example).  It describes a twisted notion of `complex vector bundle', exactly
as a complex line bundle describes a twisted notion of `complex-valued
function'.  In a unitary theory there is also differential geometry---the
obstruction is a ``differential twisting'' of complex $K$-theory---just as in
a unitary theory the line bundle~\eqref{eq:3} carries a metric and
connection.  For example, the local hamiltonian anomaly is measured by a
3-form on~$\sF'(N)$.

A \emph{hamiltonian setting} is a trivialization of the
anomaly~\eqref{eq:18}.  If the anomaly vanishes, then on each component
of~$\sF'(N)$ the trivializations form a torsor over the Picard groupoid of
flat hermitian line bundles.

  \subsection{Anomalies: modern view}\label{subsec:2.3}

As quantum field theory is local on spacetime, we require that the
bundles~$\alpha (M )$ and~$\alpha (N)$ be \emph{local} functions of~$M $
and~$N$.  The same is required for trivializations of anomalies.  Locality is
encoded by demanding that the anomalies~\eqref{eq:3} and~\eqref{eq:18} fit
together as parts of an \emph{invertible extended} $(n+1)$-dimensional field
theory\footnote{If $\alpha $~is unitary and \emph{not} topological, then we
promote~$\alpha $ to a differential field theory in the sense that the line
bundles and gerbes are smooth over smooth parameter spaces and carry metrics
and connections.  In supersymmetric~QM the anomaly is topological, so we will
not pursue this here and tacitly assume that $\alpha $~is topological.}
  \begin{equation}\label{eq:5}
     \alpha \:\BFp\longrightarrow \Sigma ^{n+1}\IRZ. 
  \end{equation}
`Extended' means that $\alpha $~has values on manifolds with corners of
dimensions~$n+1$, $n$, and~$n-1$.  We remark that the numerical invariants of
closed $(n+1)$-manifolds include the holonomies of the anomaly line
bundle~\eqref{eq:3}.  There is flexibility in choosing the codomain
in~\eqref{eq:5}.  Here we take a universal choice, the Pontrjagin or
Brown-Comenetz dual~$\IRZ$ of the sphere spectrum~\cite[Appendix~B]{HS},
shifted up in degree.  In~\S\ref{subsec:3.4} and~\S\ref{sec:5} we make more
economical choices.  After exponentiation: $\alpha (W)$~is a complex number
of unit norm for a closed $(n+1)$-manifold~$W$; $\alpha (M)$~is a
$\zt$-graded complex line for a closed $n$-manifold~$M$; and $\alpha (N)$~is
a gerbe with various $\zt$-gradings for a closed $(n-1)$-manifold~$N$.  Since
the theory is invertible, \eqref{eq:5}~factors though the quotient of the
bordism 2-category obtained by inverting all morphisms.  As the bordism
category is symmetric monoidal what is obtained is a \emph{spectrum} in the
sense of algebraic topology; see~\cite[\S2.5]{L}.  A theorem of
Galatius-Madsen-Tillmann-Weiss~\cite{GMTW} identifies it as an unstable
approximation to a Thom spectrum.  For the anomaly of supersymmetric quantum
mechanics, there are non-topological fields---the metric and the map to the
target---so it is not automatic that the anomaly is topological.
Nonetheless, it is.  In particular, the factorization of~\eqref{eq:5} is a
map of spectra, so is amenable to analysis via techniques of homotopy theory.

As stated earlier, an anomalous theory~$F$ is an example of a \emph{relative
quantum field theory}~\cite{FT}.  Thus it is a map
  \begin{equation}\label{eq:44}
     F\:\unit\longrightarrow \tau \mstrut _{\le n}\alpha 
  \end{equation}
of $n$-dimensional field theories from the trivial theory to the
$n$-dimensional truncation of~$\alpha $.  To a closed $n$-manifold~$M$ with
fields it attaches an element~$F(M)$ of the complex line~$\alpha (M)$, and to
a closed $(n-1)$-manifold~$N$ with fields it attaches a complex vector space
$F(N)$ twisted by the gerbe~$\alpha (N)$.
 
The anomaly is trivializable if $\alpha $~is isomorphic to the trivial
theory, and a trivialization of the anomaly, or setting, is a choice of
isomorphism 
  \begin{equation}\label{eq:48}
     \triv\:\alpha \xrightarrow{\;\cong \;}\unit 
  \end{equation}
\emph{as field theories}.  This general formulation encodes the locality of the
setting of the quantum integrand as well as the locality of the anomaly
itself.

   \section{Supersymmetric quantum mechanics}\label{sec:3}

Supersymmetric quantum mechanics (QM) with minimal supersymmetry was used
in~\cite{Ag, FWi} to give a physics derivation of the Atiyah-Singer index
theorem for a single Dirac operator.  An account geared to mathematicians
appears in~\cite{W3}, and a mathematically precise take on the argument was
given in~\cite{Bi}, inspired by~\cite{At1}.  We restrict our attention here
to the anomaly and its trivialization, which is a prerequisite to having a
well-defined quantum mechanical theory.
 
Supersymmetric QM is a 1-dimensional theory of a particle moving in a
Riemannian manifold~$X$.  The theory is defined on 1-manifolds~$M$ equipped
with a background metric and spin structure.  There are two fluctuating
fields on~$M$ which are integrated out in the quantum theory.  First, a map
$\phi \:M\to X$ which represents the trajectory of a particle.  Then there is
an odd field~$\psi $ which is a spinor field on~$M$ with coefficients in the
pullback tangent bundle $\phi ^*TX\to M$.  The lagrangian
density~\cite[pp.~647--656]{Detal} has kinetic terms for these fields:
  \begin{equation}\label{eq:9}
     \sL = \frac 12 \left\{ \langle \frac{d\phi }{dt},\frac{d\phi}{dt}
     \rangle + \langle \psi ,D\psi \rangle \right\} |dt|, 
  \end{equation}
where $t$~is a local coordinate on~$M$ with $d/dt$~of unit length and $D$~is
the Dirac operator on~$M$, coupled to the pullback bundle $\phi ^*TX\to M$.
A spin structure on~$M$ can be identified as a real line bundle $L\to M$
equipped with an isomorphism $L^{\otimes 2}\xrightarrow{\;\cong \;}T^*M$.
Multiplication and integration over~$M$, assuming $M$~is closed, gives a
self-dual pairing on spinor fields with respect to which the Dirac
operator~$D$ is formally skew-adjoint.  The spinor fields, which are sections
of $L\to M$, are real, as is the skew-adjoint Dirac operator.  We do not
dwell on the precise meaning of the kinetic term for fermions.

  \subsection{Lagrangian anomaly}\label{subsec:3.1}

Integrate out the fermionic field~$\psi $, assuming that the 1-manifold
$M$~is closed.  In the notation of~\S\ref{sec:2} this is fermionic
integration over the fibers of $\sF(M)\to\sF'(M)$.  The result is standard:
ignoring the kinetic term for~$\phi $, which plays no role in anomaly
analysis, we obtain the pfaffian of the Dirac operator, which is a
section~$\pfaff D$ of the Pfaffian line bundle~\cite[\S3]{F3}
  \begin{equation}\label{eq:10}
     \Pfaff D\longrightarrow \sF'(M). 
  \end{equation}
Furthermore, this \emph{real} line bundle carries a metric and compatible
covariant derivative.  Thus locally there are two unit norm sections~$\triv$;
an orientation of~$\Pfaff D\to\sF'(M)$---which is a topological
trivialization and may not exist---picks out a global section.

In this section we make the following hypothesis, which we relax
in~\S\ref{sec:5}.

  \begin{assumption}[]\label{thm:2}
 The target manifold~$X$ is even-dimensional and oriented. 
  \end{assumption}

  \begin{theorem}[]\label{thm:3}
 Given Assumption~\ref{thm:2}, the topological equivalence class in
$H^1(\sF'(M);\zt)$ of the lagrangian anomaly $\Pfaff D\to\sF'(M)$ is the
transgression of~$w_2(X)\in H^2(X;\zt)$.
  \end{theorem}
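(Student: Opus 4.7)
The plan is to reduce to $M = S^1$ and then compute $w_1$ of the Pfaffian line bundle via the Atiyah--Singer families index theorem in $KO$-theory. Since a closed $1$-manifold is a disjoint union of circles and the Pfaffian line of a disjoint union is the tensor product of the Pfaffian lines of the components, we may assume $M = S^1$. The space of Riemannian metrics on $S^1$ is contractible and the set of spin structures on $S^1$ is discrete, so $\sF'(S^1)$ deformation retracts onto $\{\varepsilon_+,\varepsilon_-\}\times\Map(S^1,X)$. The transgression $\tau(w_2(X))$ is pulled back from $\Map(S^1,X)$, so it suffices to identify the two classes there after fixing a spin structure.

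Under Assumption~\ref{thm:2}, the bundle $\phi^*TX\to S^1$ has even rank and is oriented, which makes $\Pfaff D\to\Map(S^1,X)$ a well-defined real line bundle. Its first Stiefel--Whitney class is the mod~$2$ spectral flow, or equivalently, the image of the analytic families index $\ind(D)\in KO^{-1}(\Map(S^1,X))$ under the natural transformation $KO^{-1}(-)\to H^1(-;\zt)$ that classifies the Pfaffian line bundle. By the Atiyah--Singer families index theorem, $\ind(D) = p_!\bigl([\mathrm{ev}^*TX]\bigr)$, where $p\: S^1\times\Map(S^1,X)\to\Map(S^1,X)$ is the projection and $\mathrm{ev}\: S^1\times\Map(S^1,X)\to X$ is the evaluation map, with $p_!$ constructed using the fixed spin structure on the $S^1$ fiber.

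The conclusion then follows from the universal identity
\[
w_1\bigl(p_!([V])\bigr) \;=\; \int_{S^1} w_2(V) \quad\text{in } H^1(B;\zt),
\]
valid for any real vector bundle $V$ on $S^1\times B$. Applied to $V=\mathrm{ev}^*TX$ this yields $w_1(\Pfaff D)=\int_{S^1}\mathrm{ev}^*w_2(X)=\tau(w_2(X))$, as desired.

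The main obstacle is verifying the universal identity in the last display, which is essentially the mod~$2$ families index theorem relating the Pfaffian of a $KO$-theoretic push-forward to fiber integration of $w_2$. I would establish it by reducing to the universal example: classify $V$ by a map $S^1\times B\to BSO(2k)$, and exploit that $H^*(BSO;\zt)$ is a polynomial algebra in the Stiefel--Whitney classes so that naturality, degree, and a small check on loops in $SO(2k)\subset\Map(S^1,BSO(2k))$ pin down both sides. An alternative route is to compare the $KO$-theoretic and $\zt$-cohomological Thom classes of the vertical tangent bundle of $p$ directly, using the multiplicativity of $w_2$ and the spin structure on the fiber.
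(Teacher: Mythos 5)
Your proposal follows essentially the same route as the paper: reduce to $M=\cir$, invoke the Atiyah--Singer families index theorem to identify the Pfaffian class with the image of the $KO$-theoretic pushforward $(\pi_1)_*e^*[TX]$, and then identify that image in degree-one mod~2 cohomology with the fiber integral of $w_2$. The only substantive difference is in how that last identification is carried out: the paper pairs against loops in $\sF'(M)$, reducing to a pushforward over a torus (which carries the bounding spin structure, so the rank term drops out), and then uses excision to pass to $S^2$ and the suspension isomorphism; you instead assert a universal identity $w_1\bigl(p_!([V])\bigr)=\int_{\cir}w_2(V)$ over an arbitrary base and propose to verify it by naturality on $BSO(2k)$. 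One caution: that identity is \emph{false} for an arbitrary real vector bundle $V$. For example, with the nonbounding spin structure on the fiber and $V=p^*L$ for a real line bundle $L\to B$ with $w_1(L)\neq 0$, the projection formula gives $p_!([V])=\eta\,[L]$, whose degree-one component is $w_1(L)\neq 0$, while $w_2(V)=0$; in general there are correction terms involving $w_1(V)$ and the fiber spin structure, exactly as in the two-dimensional analogue Proposition~\ref{thm:16}. Your verification strategy via $BSO(2k)$ implicitly builds in orientability, and Assumption~\ref{thm:2} guarantees $w_1(e^*TX)=0$ (and kills the rank term up to a constant class), so the argument is correct as applied here --- but you should state the lemma for oriented $V$ and note where the independence of the fiber spin structure is used.
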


\noindent
 Because $\sF'(M)$ includes the field $\phi \:M\to X$, there is an evaluation
map which is the top arrow in the diagram 
  \begin{equation}\label{eq:38}
     \begin{gathered} \xymatrix{\sF'(M)\times M\ar[d]_{\pi_1}\ar[r]^<<<<<e &X \\
     \sF'(M)} \end{gathered} 
  \end{equation}
The vertical map is projection onto the first factor.  Transgression is the
composition $(\pi _1)_*\circ e^*$ on mod~2 cohomology.  The pushforward 
  \begin{equation}\label{eq:49}
     (\pi _1)_* \:H^2\bigl(\sF'(M)\times M;\zt \bigr)\longrightarrow
     H^2\bigl(\sF'(M);\zt \bigr) 
  \end{equation}
in mod~2 cohomology is defined without any orientation data on the fibers
of~$\pi _1$.  Notice that the anomaly is purely topological; it is
independent of the background metric on~$M$.  It also turns out to be
independent of the background spin structure on~$M$, as is clear from the
formula in the theorem.  Theorem~\ref{thm:3} is well-known.  The proof we
sketch here, which is based on the topological Atiyah-Singer index theorem,
appears in~\cite[(5.22)]{FW}.

  \begin{proof}
 The manifold~ $M$ is a finite union of circles, and since under disjoint
union $\Pfaff D$~ is multiplicative and the transgression of~$w_2(X)$ is
multiplicative, it suffices to consider~$M=\cir$.  Also, the class
  \begin{equation}\label{eq:11}
     [\Pfaff D]\in H^1\bigl(\sF'(M);\zt \bigr) 
  \end{equation}
is determined by its pairing with the fundamental class of smooth loops
$\cir\to\sF'(M)$.  Pull back~\eqref{eq:38} over a single loop to obtain a
family
  \begin{equation}\label{eq:12}
     \begin{gathered} \xymatrix{\cir\times M \ar[r]^<<<<e \ar[d]_<<<<{\pi
     _1}&X\\\cir }
     \end{gathered}
  \end{equation}
of circles parametrized by the circle.  The Atiyah-Singer theorem~\cite{AS2}
computes the value of~\eqref{eq:11} on the base circle as a pushforward
in~$KO$-theory, where the base circle has the bounding spin structure: 
  \begin{equation}\label{eq:13}
     \bigl\langle [\Pfaff D],[\cir] \bigr\rangle = \pi ^{\cir\times M}_*(e
     ^*TX)\in KO^{-2}(\pt)\cong \zt. 
  \end{equation}
No matter what the spin structure on the circle~$M$, the torus~$\cir\times M$
has the bounding spin structure, whence \eqref{eq:13}~is independent of the
spin structure on~$M$.  For the bounding spin structure the $KO$-pushforward
$\pi _*^{\cir\times M}$ of the trivial bundle vanishes, so we can replace~$e
^*TX$ by the reduced virtual bundle, and now by excision we replace the
torus~$S^1\times M$ with the 2-sphere.  Then the $KO$-pushforward becomes the
suspension isomorphism, and since $\widetilde{KO}^{0}(S^2)\cong \zt$ via the
second Stiefel-Whitney class, it follows that
  \begin{equation}\label{eq:14}
     \bigl\langle [\Pfaff D],[\cir] \bigr\rangle = \bigl\langle \phi
     ^*w_2(X),[S^1\times M] 
     \bigr\rangle, 
  \end{equation}
as desired. 
  \end{proof}

  \begin{remark}[]\label{thm:5}
 The Atiyah-Singer index theorem computes the Pfaffian line bundle as a
transgression in $KO$-theory.  Since $M$~has very small dimension, and because
we make the simplifying Assumption~\ref{thm:2}, a very simple truncation of
$KO$-theory suffices, namely mod~2 cohomology.  When we drop
Assumption~\ref{thm:2} in~\S\ref{sec:5} the Pfaffian will be computed by a
somewhat larger truncation of~$KO$-theory.
  \end{remark}

  \begin{remark}[]\label{thm:4}
 The lagrangian anomaly is a complex line bundle, the complexification
of~\eqref{eq:10}, so its equivalence class in~$H^2\bigl(\sF'(M);\ZZ \bigr)$
is the integral Bockstein of the equivalence class of the real
bundle~\eqref{eq:10}.  Since integral Bockstein~$\beta \mstrut _{\ZZ}$
commutes with transgression, that equivalence class is the transgression of
  \begin{equation}\label{eq:21}
     W_3(X)=\beta \mstrut _{\ZZ}w_2(X)\;\in H^3(X;\ZZ). 
  \end{equation}
But since supersymmetric~QM is unitary, the anomaly bundle carries a metric
and connection.  In this case the connection is flat of order two---all
holonomies are~$\pm1$---and is encoded precisely by the real structure, i.e.,
by the real Pfaffian line bundle~\eqref{eq:10}.
  \end{remark}

  \subsection{Hamiltonian anomaly}\label{subsec:3.2}

For more details on parts of this subsection, see~\cite[pp.~372--373]{Detal}
and \cite[pp.~679--681]{Detal}.

It suffices to consider a connected 0-manifold, so a point~$N=\pt$.
Technically, we should embed~$N$ in a germ of a Riemannian 1-manifold, but
that plays no role since ultimately the anomaly is topological.  We also have
a spin structure on the augmented tangent bundle, augmented in the sense that
we add a trivial bundle to make it 1-dimensional.  Up to isomorphism this is
determined by a sign, comparing the orientation underlying the spin structure
to the standard orientation on the real line~$\RR$.  We take the sign to
be~$+$.  The space of classical solutions to the Euler-Lagrange equations
derived from the lagrangian~\eqref{eq:9} is a symplectic supermanifold, and
for the partial quantization which integrates out the fermionic field~$\psi $
we work with a fixed~$\phi $.  In canonical quantization we only
consider~$\phi,\psi $ which satisfy the classical equations of motion, a
second order ODE for~$\phi $ and a first order ODE for~$\psi $.  The space of
classical solutions~$\phi ,\psi $ on~$\RR\times N$ (time cross space) may,
after choosing an initial time, be identified with the supersymplectic
manifold
  \begin{equation}\label{eq:15}
     \pi ^*\Pi TX\longrightarrow TX, 
  \end{equation}
where $\pi \:TX\to X$ is the tangent bundle with its symplectic structure
derived from the Riemannian metric, via the induced isomorphism
$TX\xrightarrow{\;\cong \;}T^*X$ and the standard symplectic structure on the
cotangent bundle.  The fibers of~\eqref{eq:15} are the parity-reversed
tangent spaces, which have an odd symplectic structure given by the
Riemannian metric.  The quantization problem for the constant $\phi \equiv x$
is to quantize the odd symplectic vector space~$\Pi T_xX$.
Assumption~\ref{thm:2} that $X$~is even dimensional ensures the existence of
a complex polarization, which is the parity reversal of a half-dimensional
isotropic subspace~$W\subset T_xX\otimes\CC$ of the complexified tangent
space.  This induces a complex structure on~$T_xX$, and we demand that the
induced orientation agree with the orientation given in
Assumption~\ref{thm:2}.  Write the polarization as a decomposition
  \begin{equation}\label{eq:16}
     T_xX\otimes \CC\cong W\oplus \bW. 
  \end{equation}
The quantum Hilbert space is then the space of functions on~$\Pi W$, which we
identify with the $\zt$-graded exterior algebra~$\sH={\textstyle\bigwedge}
W^*\cong {\textstyle\bigwedge} \bW$.  Complex linear functions on~$\Pi T_xX$
act as operators on~$\sH$: elements of~$(\Pi W)^*\cong \Pi \bW$ act by
exterior multiplication and elements of $(\Pi \bW)^*\cong \Pi W$ by
contraction.  These are the standard creation and annihilation operators, and
they generate the action of the Clifford algebra built on~$T_xX^*\otimes
\CC$.
 
The Clifford module~$\sH$ depends on the choice of
polarization~\eqref{eq:16}.  The underlying projective space~$\PP\sH$ is
independent of the polarization.  Thus, without any choice of polarization,
partial hamiltonian quantization along the fibers of~\eqref{eq:15} produces a
bundle $\pi ^*\sP\to TX$ of complex projective spaces, where
  \begin{equation}\label{eq:17}
     \sP\longrightarrow X 
  \end{equation}
is the bundle of projective complex spin representations.  In other words, if
$SO(X)\to X$ is the oriented orthonormal frame bundle with structure
group~$SO_{2m}$, then \eqref{eq:17}~is the bundle associated to the
projective spin representation $SO_{2m}\to\Aut(\PP)$.  The projective
bundle~\eqref{eq:17}, pulled back to~$TX$, is one model for the hamiltonian
anomaly~\eqref{eq:18}.  Another model is the pullback of the bundle of
complex Clifford algebras
  \begin{equation}\label{eq:19}
     \Cliff^\CC(TX)\longrightarrow X ,
  \end{equation}
formed as the associated bundle to the conjugation action
$SO_{2m}\to\Aut(\Cliff^\CC_{2m})$ on the standard complex Clifford algebra.
 
The bundles~\eqref{eq:17} and~\eqref{eq:19} are both standard models for the
\emph{gerbe} represented by the integral Bockstein~\eqref{eq:21} of the
second Stiefel-Whitney class of~$X$.  As in Remark~\ref{thm:4} the
hamiltonian gerbe carries flat differential geometric data which amount to
the \emph{real} gerbe represented by the bundle of real Clifford algebras
  \begin{equation}\label{eq:22}
     \Cliff(TX)\longrightarrow X . 
  \end{equation}
Its equivalence class is precisely the second Stiefel-Whitney class
$w_2(X)\in H^2(X;\zt)$.  From the field theory point of view, a fiber
of~\eqref{eq:19} is the operator algebra generated by~$\psi $.

  \subsection{Trivializing the lagrangian and hamiltonian
  anomalies}\label{subsec:3.3} 

We show that a spin structure on~$X$ induces a canonical trivialization of
the lagrangian anomaly~\eqref{eq:10} and the hamiltonian
anomaly~\eqref{eq:22}, where for both we incorporate the real structures. 
 
The statement for the lagrangian anomaly follows from a ``categorification''
of Theorem~\ref{thm:3}.  Namely, Theorem~\ref{thm:3} is a topological formula
for the equivalence class of the lagrangian anomaly; it is a topological
index theorem.  What we construct now is an \emph{isomorphism} of the
Pfaffian line bundle with a real line bundle which represents the
transgression of~$w_2(X)$.  A spin structure on~$X$ induces a trivialization
of this line bundle and so, via this isomorphism, a trivialization of the
lagrangian anomaly.  The argument appears in~\cite[\S5.2]{DFM} for the
nonbounding spin structure on the circle; here we give a few more details and
treat the bounding spin structure as well.

The Pfaffian line bundle~\eqref{eq:10} carries a Quillen metric.  The points
of unit norm in each fiber form a $\zt$-torsor, and from the torsor we can
canonically reconstruct the fiber as a real line with metric.  The torsor is
canonically equivalent to~$\pi _0\bigl(\Pfaff D_\phi \setminus \{0\}\bigr)$,
where $\Pfaff D_\phi $~is the fiber over~$\phi $.  As in~\S\ref{subsec:3.1}
it suffices to take~$M=\cir$.  Fix $\phi \:\cir \to X$ and let $E=\phi
^*TX\to \cir $ be the pullback tangent bundle, which is an oriented real
vector bundle with metric and covariant derivative.  Let $SO(E)\to \cir $ be
its bundle of oriented orthonormal frames.  Let $\Gamma_\phi $~be the space
of sections of $SO(E)\to \cir $, which is nonempty.  Since the group of
homotopy classes of maps $\cir \to SO_{2m}$ is cyclic of order two, $\pi
_0(\Gamma_\phi )$~is a $\zt$-torsor.

  \begin{theorem}[]\label{thm:7}
 After a universal choice, there is a canonical isomorphism 
  \begin{equation}\label{eq:45}
     \pi _0\bigl(\Pfaff D_\phi\setminus \{0\}\bigr) \xrightarrow{\;\cong
     \;}\pi _0(\Gamma_\phi ). 
  \end{equation}
\end{theorem}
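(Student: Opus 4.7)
The plan is to construct, for each $s \in \Gamma_\phi$, a canonical nonzero element of $\Pfaff D_\phi$ well-defined up to a universal global sign, and to verify that the induced map on $\pi _0$ realizes the isomorphism of $\zt$-torsors. A section $s$ provides an orthogonal, orientation-preserving trivialization $\phi ^*TX \cong \cir \times \RR^{2m}$. Under this trivialization the Dirac operator coupled to $\phi ^*TX$ becomes $D_s = D_0 + \rho(A_s)$, where $D_0$ is the Dirac operator on $\cir$ coupled to the trivial rank-$2m$ bundle with the given background spin structure, $A_s \in \Omega^1(\cir; \mathfrak{so}_{2m})$ is the connection matrix of the pulled-back Levi-Civita connection read off in the frame $s$, and $\rho$ denotes Clifford multiplication.

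The universal choice referenced in the theorem is a single nonzero element $v_0 \in \Pfaff D_0 \setminus \{0\}$, made once and for all (depending only on the background spin structure on $\cir$). From $v_0$ I would obtain a nonzero element $\pfaff(D_\phi, s) \in \Pfaff D_\phi$ by parallel transport along the straight-line path $t \mapsto D_0 + t\rho(A_s)$, $t \in [0,1]$, using the natural connection on the Pfaffian line bundle over this contractible family of skew-adjoint Fredholm operators. The result is independent of auxiliary choices up to positive scalar, so yields a well-defined point of $\pi_0(\Pfaff D_\phi \setminus \{0\})$. Continuous deformation of $s$ through homotopic sections produces a homotopy of operators, and hence the same component, so the construction factors through $\pi_0(\Gamma_\phi)$.

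The main obstacle, and the real content of the theorem, is showing that the resulting map is a bijection of $\zt$-torsors. By cardinality, this reduces to showing the map is non-constant: if $g \: \cir \to SO_{2m}$ represents a nontrivial class in $\pi _1(SO_{2m}) = \zt$, then $\pfaff(D_\phi, g\cdot s) = -\pfaff(D_\phi, s)$. A $[0,1]$-family of gauge transformations interpolating between $s$ and $g \cdot s$ assembles the family $\{D_{s_u}\}$ into a single Dirac operator on the torus $\cir \times \cir$, with the bundle classified by $g$ on the parameter direction, and the required sign is the mod-$2$ spectral flow of the family. That spectral flow equals the mod-$2$ index of the torus Dirac operator coupled to this rank-$2m$ bundle, which—by the very real Atiyah-Singer calculation underlying Theorem~\ref{thm:3}—equals the mod-$2$ reduction of the winding class of $g$. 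Both spin structures on $\cir$ induce the bounding spin structure on the parameter direction of the torus, so the sign formula, and hence the canonical isomorphism, is uniform across the two background spin structures on $\cir$; it is precisely the normalization of $v_0$ that is the ``universal choice'' appearing in the statement.
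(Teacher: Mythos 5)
Your construction runs in a genuinely different direction from the paper's. The paper reduces to finite dimensions: choosing a frame $e$ with constant connection matrix, it shows $D_\phi=\nabla_\xi$ is invertible off the $2m$-dimensional subspace $W$ spanned by the frame, so that $\Pfaff D_\phi\cong\Det W^*$ canonically, and then identifies orientations of $W$ with $\pi_0(\Gamma_\phi)$ via Gram--Schmidt (with the bounding spin structure handled by tensoring with $L$ and using the path from $1$ to $-1$ in $SO_{2m}$, which is where the paper's universal choice lives). You instead transport a reference element $v_0\in\Pfaff D_0\setminus\{0\}$ along the straight-line path to $D_s$ and outsource injectivity to an index computation. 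Your route buys a cleaner separation between ``construct the map'' and ``show it is nontrivial,'' and in fact your $v_0$ can be chosen canonically (for the nonbounding spin structure $\ker D_0\cong\RR^{2m}$ carries its standard orientation; for the bounding one $D_0$ is invertible and $\Pfaff D_0$ is canonically trivialized), so your universal choice is of a different nature than the paper's. The paper's route buys an explicit, finite-dimensional isomorphism in which the $\zt$-torsor structure is manifest and no separate surjectivity/index argument is needed.

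There is, however, one step that fails as written: the ``$[0,1]$-family of gauge transformations interpolating between $s$ and $g\cdot s$'' does not exist. When $g\:\cir\to SO_{2m}$ has nontrivial winding, $g$ lies outside the identity component of the gauge group $\Map(\cir,SO_{2m})$, and $s$ and $g\cdot s$ lie in \emph{different} components of $\Gamma_\phi$ --- that is exactly the content of the theorem; if such a path of sections existed, $\pi_0(\Gamma_\phi)$ would be a point and the statement would be vacuous. The repair is standard but should be made explicit: the sign comparing $\pfaff(D_\phi,s)$ with $\pfaff(D_\phi,g\cdot s)$ is the value of $w_1$ of the Pfaffian line bundle on a \emph{loop in the space of operators modulo gauge}, obtained by concatenating the two straight-line transport paths out of $D_0$ and closing up with the gauge identification $\Pfaff D_s\cong\Pfaff D_{g\cdot s}$ induced by $g$. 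Pulling back the universal family over that loop produces the $2$-torus carrying the rank-$2m$ bundle obtained by \emph{clutching} with $g$ in the parameter direction (a mapping-torus construction, not an interpolation through sections), and only then does the mod~$2$ index computation $\langle w_2,[T^2]\rangle=[g]\in\pi_1(SO_{2m})$ apply. With that substitution your argument goes through; without it, the family whose spectral flow you compute has not actually been constructed.
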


\noindent
 The universal choice is a path in the special orthogonal group from~$1$
to~$-1$; cf., Remark~\ref{thm:9}.

  \begin{corollary}[]\label{thm:8}
 A spin structure on~$X$ determines a trivialization~$\triv$ of $\Pfaff
D_\phi $.
  \end{corollary}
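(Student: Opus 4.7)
The plan is to convert the spin structure on $X$ into a distinguished component of $\Gamma_\phi$, then apply Theorem \ref{thm:7} to obtain a distinguished component of $\Pfaff D_\phi \setminus \{0\}$; the Quillen metric on this real line then singles out a unique unit-norm element in that component, which is the trivialization $\triv$.

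First I would establish a canonical bijection between isomorphism classes of spin structures on $E = \phi^*TX \to \cir$ and elements of $\pi_0(\Gamma_\phi)$. Given a section $s \in \Gamma_\phi$, the induced trivialization $E \cong \cir \times \RR^{2m}$ pulls the canonical spin structure on the trivial bundle back to a spin structure $\sigma_s$ on $E$. Two sections in the same component of $\Gamma_\phi$ differ by a null-homotopic map $\cir \to SO_{2m}$ and hence produce isomorphic spin structures, while sections in different components differ by a generator of $\pi_1(SO_{2m}) \cong \zt$. Both $\pi_0(\Gamma_\phi)$ and the set of spin structures on $E$ are $\zt$-torsors (the latter classified by $H^1(\cir;\zt) \cong \zt$), so the assignment $s \mapsto \sigma_s$ descends to a bijection on $\pi_0$.

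With this bijection in hand, the pullback of the spin structure on $X$ along $\phi$ is a spin structure on $E$, and hence selects a distinguished component of $\Gamma_\phi$. The canonical isomorphism of Theorem \ref{thm:7} (whose universal choice is assumed fixed once and for all) transports this to a distinguished component of $\Pfaff D_\phi \setminus \{0\}$, and the Quillen metric then picks out the unique unit-norm vector $\triv \in \Pfaff D_\phi$ lying in that component.

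The main obstacle is verifying that $s \mapsto \sigma_s$ is genuinely equivariant for the two $\zt$-torsor structures, which reduces to showing that the nontrivial class in $\pi_1(SO_{2m})$, when used to twist a trivialization of $E$, exchanges the two spin structures. This follows from the fact that a representing loop lifts to a path in $\Spin_{2m}$ from $1$ to $-1$, so its monodromy in the induced $\Spin$-bundle is the deck transformation of the double cover. Once this is checked, naturality of every construction in $\phi$ assembles the pointwise trivializations into a smooth global section of the Pfaffian line bundle over~$\sF'(M)$, giving the desired trivialization of the lagrangian anomaly.
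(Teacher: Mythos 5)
Your proposal is correct and follows essentially the same route as the paper: use the spin structure on $X$ to single out a component of $\Gamma_\phi$, transport it through Theorem~\ref{thm:7}, and let the Quillen metric pick the unit-norm vector. The paper identifies the distinguished component slightly more directly---the space of sections of the induced $\Spin_{2m}$-bundle $\Spin(E)\to\cir$ is connected and projects into a single component of $\Gamma_\phi$---which is the same mechanism as your $\zt$-torsor bijection between spin structures on $E$ and $\pi_0(\Gamma_\phi)$.
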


\noindent 
 The fact that the trivialization is canonical, given the spin structure
on~$X$, means that the trivializations of the lines~$\Pfaff D_\phi $ patch to
a smooth trivialization of the lagrangian anomaly~\eqref{eq:10}.

  \begin{proof}[Proof of Corollary~\ref{thm:8}]
 A spin structure on~$X$ induces a spin structure on $E\to \cir $.  Let
$\Spin(E)\to SO(E)\to \cir $ be the corresponding $\Spin_{2m}$-bundle of
frames.  The space of sections of $\Spin(E)\to \cir $ is connected, so maps
into a single component of~$\Gamma_\phi $.
  \end{proof}

  \begin{proof}[Proof of Theorem~\ref{thm:7}]
 For convenience let the metric on the circle~$\cir $ have total length~1.
Choose a periodic coordinate~$t$ so that $\xi =d/dt$ has unit length and is
properly oriented.  Let $L\to \cir $ denote the spin structure, which is a
real line bundle with metric equipped with an isomorphism $L^{\otimes
2}\xrightarrow{\;\cong \;}\underline{\RR}$ of its square with the trivial
bundle of rank one.  Either $L\to \cir $ is the trivial bundle (nonbounding
spin structure) or the M\"obius bundle (bounding spin structure).  The real
skew-adjoint Dirac operator~$D_\phi $ may be identified with the covariant
derivative operator~$\nabla _\xi $ on sections of $L\otimes E\to \cir $.

Suppose first that $L\to \cir $ is the trivial bundle.  If $e\in \Gamma_\phi
$ is a pointwise oriented orthonormal basis of sections of $E\to \cir $, then
$\nabla _\xi (e)=A(e)\cdot e$ for some function $A(e)\:\cir \longrightarrow
\mathfrak{s}\mathfrak{o}\mstrut _{2m}$.  Up to a constant element
of~$SO_{2m}$ we can choose~$e$ so that $A(e)$~is a constant skew-symmetric
matrix~$A\in \mathfrak{s}\mathfrak{o}\mstrut _{2m}$ whose
eigenvalues~$a\sqrt{-1}$ satisfy $-\pi <a\le \pi $.  Then the holonomy of~$E$
around~$\cir $ is $\exp(A)\in SO_{2m}$.  Let $\sH$~be the real Hilbert space
of $L^2$~sections of~$E$ and $W\subset \sH$ the subspace spanned by the
$2m$~sections which comprise the framing~$e$.  The algebraic direct sum
$\bigoplus_{k\in \ZZ}e^{2\pi ikt}W$ is dense in~$\sH$.  Furthermore, the
absolute value of the eigenvalues of~$\nabla _\xi $ on $e^{2\pi ikt}W$ is
bounded below by~$(2|k|-1)\pi$, whence $\nabla _\xi $~is invertible on the
orthogonal complement to~$W$.  It follows directly from the
construction~\cite[\S3]{F3} of the Pfaffian line that $\Pfaff D_\phi
=\Pfaff\nabla _\xi $ is canonically isomorphic to~$\Det W^*$.  There is an
induced isomorphism
  \begin{equation}\label{eq:24}
     \pi _0\bigl(\Pfaff D_\phi\setminus \{0\}\bigr)\xrightarrow{\;\;\cong
     \;\;} \pi _0\bigl(\Det W^*\setminus \{0\} \bigr) 
  \end{equation}
of $\zt$-torsors.  The latter is the $\zt$-torsor of orientations of~$W$.
Now an ordered basis of~$W$ is a sequence of $2m$~sections of $E\to \cir $
which are linearly independent at each point, so after applying Gram-Schmidt
determines an element of~$\Gamma_\phi $.  This induces an isomorphism
  \begin{equation}\label{eq:25}
     \pi _0\bigl(\Det W^*\setminus \{0\} \bigr)\xrightarrow{\;\;\cong \;\;}
     \pi _0(\Gamma_\phi) , 
  \end{equation}
and the isomorphism~\eqref{eq:45} is the composition of \eqref{eq:24}
and~\eqref{eq:25}.

If $L\to \cir $ is the M\"obius bundle, then the preceding argument gives an
isomorphism of the Pfaffian line with the components of the space~$\GL $ of
sections of $SO(E\otimes L)\to \cir $, where $SO(E\otimes L)$~is the oriented
orthonormal frame bundle of ~$E\otimes L$.  Fix a path~$g(t),\;0\le t\le1$,
in~$SO_{2m}$ with $g(0)=1$ and~$g(1)=-1$.  Then if $e\in \GL $~is a section
of~$SO(E)\to \cir $, the product~$e\cdot g$ is a section of~$SO(E\otimes L)$.
There is an induced isomorphism of $\zt$-torsors $\pi
_0(\GL)\xrightarrow{\;\cong \;}\pi _0(\Gamma_\phi) $.
  \end{proof}

  \begin{remark}[]\label{thm:9}
 The isomorphism $\GL\to\Gamma_\phi $ depends on the choice of path~$g$ and
the induced isomorphism $\pi _0(\GL)\to\pi _0(\Gamma_\phi ) $ depends on the
homotopy class of~$g$ rel boundary.  There are two such homotopy classes.
Therefore, the isomorphism of Theorem~\ref{thm:7}, and so the trivialization
of Corollary~\ref{thm:8}, depends on this universal choice.
  \end{remark}

  \begin{remark}[]\label{thm:10}
 Theorem~\ref{thm:7} is an example of a ``categorified index theorem''.  We
expect in general that isomorphisms in theorems of this type depend on
universal choices. 
  \end{remark}

A spin structure on~$X$ leads more directly to a trivialization of the
hamiltonian anomaly~\eqref{eq:19}.  Recalling the discussion
in~\S\ref{subsec:3.2} we solve the quantization problem by the $\zt$-graded
bundle of complex spinors, which is a vector space lift of~\eqref{eq:17}.  In
terms of the bundle of algebras~\eqref{eq:19}, let $\Spin(X)\to X$ denote the
spin structure, a principal $\Spin_{2m}$-bundle.  Left multiplication by
$\Spin_{2m}\subset \Cliff_{2m}$ on~$\Cliff_{2m}$ induces a real vector bundle
over~$X$ which is a bundle of invertible bimodules between~\eqref{eq:22} and
the constant bundle of algebras with fiber~$\Cliff_{2m}$.  (See
~\S\ref{sec:4} for a discussion of the 2-category of algebras; invertible
bimodules are isomorphisms, also known as Morita equivalences.)  Upon
complexification the latter bundle is Morita isomorphic to the trivial bundle
of algebras, since $\Cliff^{\CC}_{2m}$ is Morita trivial.  This Morita
viewpoint on spin structures is emphasized in~\cite{ST}.

  \subsection{The anomaly as an invertible field theory}\label{subsec:3.4}

The modern view in~\S\ref{subsec:2.3} is that the anomaly in supersymmetric
QM is a 2-dimensional invertible extended field theory~$\alpha
_{\textnormal{analytic}}$.  We do not give a direct analytic construction of
the entire field theory from Dirac operators---we have pieces of it in
previous subsections---though that would be an interesting general
undertaking in geometric index theory.  Rather, we use the index theory
carried out in the previous subsections to motivate a direct topological
definition of a field theory $\alpha =\alpha _{\textnormal{topological}}$,
which should be isomorphic to~$\alpha _{\textnormal{analytic}}$.

Recall that the fields~$\sF$ of supersymmetric~QM consist of a metric, spin
structure, map~$\phi $, and fermionic field~$\psi $.  The anomaly in question
occurs after integrating out~$\psi $, so naively we expect it to depend on
the three background fields.  However, as is clear from Theorem~\ref{thm:3}
and the discussion in~\S\ref{subsec:3.2}, it is independent of the metric and
spin structure.\footnote{after some universal choice; see
Remark~\ref{thm:9}.}  Furthermore, up to isomorphism it only depends on the
homotopy class of~$\phi $, since the anomaly is flat: a flat line bundle for
a family of 1-manifolds and a flat gerbe for a family of 0-manifolds.
Therefore, the anomaly has a purely topological description.

Let $\Bord_2(X)$~denote the bordism 2-category of 0-, 1-, and 2-manifolds
equipped with a map to~$X$.  (See~\cite{L} for an exposition of bordism
multicategories and~\cite{Ay} for bordism categories of manifolds with
general geometric structures.)  As the anomaly theory is invertible, it
factors through the geometric realization of~$\Bord_2(X)$, which inverts all
the morphisms.  According to a theorem of
Galatius-Madsen-Tillmann-Weiss~\cite{GMTW}, the result is the 0-space of the
smash product
  \begin{equation}\label{eq:26}
     \Sigma ^2MTO_2\wedge X_+. 
  \end{equation}
Here $MTO_2$~is the Thom spectrum of the virtual vector bundle $-V\to BO_2$,
the negative of the canonical 2-plane bundle over the classifying space
of~$O_2$.  The~`$+$' denotes a disjoint basepoint.  An invertible topological
field theory is a spectrum map out of~\eqref{eq:26}; we take the codomain to
be a shift of the Eilenberg-MacLane spectrum~$H\zt$ for mod~2 cohomology.
(In~\S\ref{subsec:2.3} we discussed a universal choice, the Pontrjagin dual
of the sphere, but for this example the simpler Eilenberg-MacLane spectrum
suffices and captures the theory more precisely.)  That map is the
composition
  \begin{equation}\label{eq:27}
     \alpha \:\Sigma ^2MTO_2\wedge X_+\xrightarrow{\;\;\id\wedge w_2\;\;}
     \Sigma 
     ^2MTO_2\wedge K(\zt,2)_+ \xrightarrow{\;\;\textnormal{Thom}\;\;} \Sigma
     ^2H\zt 
  \end{equation}
To construct the first map we represent the second Stiefel-Whitney class of
the tangent bundle by a map $X\xrightarrow{w_2} K(\zt,2)$ into the
appropriate Eilenberg-MacLane space.  The second spectrum is the Thom
spectrum of $\underline{\RR^2}-V\to BO_2\times K(\zt,2)$, where
$\underline{\RR^2}\to BO_2\times K(\zt,2)$ is the vector bundle with constant
fiber~$\RR^2$.  The Thom isomorphism identifies the second cohomology of the
Thom spectrum with the second cohomology of the (suspension spectrum of the)
base, and the map labeled~`Thom' is the composition of the Thom isomorphism
and projection onto the second factor. Intuitively, if $S$~is a space, then
for~$m=0,1,2$ a map of~$S\times S^m$ into~\eqref{eq:26} is a parametrized
family over~$S$ of closed $m$-manifolds equipped with a map to~$X$:
  \begin{equation}\label{eq:28}
     \begin{gathered} \xymatrix{\sM\ar[d]_\pi\ar[r]^\phi &X \\
     S} \end{gathered}  
  \end{equation}
The value of~$\alpha $, computed as composition with~\eqref{eq:27}, is a map
$S\to K(\zt,2-m)$ whose homotopy class is the transgression~$\pi _*\phi
^*w_2(X)$.

A spin structure on~$X$ can be identified with a null homotopy of the map
$X\to K(\zt,2)$ representing~$w_2(X)$, which induces a null homotopy of the
first map in~\eqref{eq:27} and then, by composition, of~\eqref{eq:27} as
well.  This is a trivialization~\eqref{eq:48} of the anomaly theory~$\alpha
$.

   \section{Central simple algebras and topology}\label{sec:4}

Real vector spaces are a model for real $K$-theory in a precise sense, and in
this section we describe models for various truncations of and modules over
real $K$-theory.  We do not give proofs of the statements made here; we hope
to provide them elsewhere.
 
Traditionally~\cite{At2} real $K$-theory is defined on a compact space~$S$ as
the universal abelian group constructed from the monoid of equivalence
classes of real vector bundles on~$S$, with the monoid operation being direct
sum.  Tensor product of vector bundles makes this $K$-theory group into a
ring.  Combining with the suspension construction in topology one obtains
\emph{connective $ko$-theory}, which only has nonzero cohomology in
nonpositive degrees.  Equivalently, the homotopy groups of the spectrum~$ko$
are only nonzero in nonnegative degrees, hence the adjective
`connective'.\footnote{Periodic $KO$-theory is constructed from connective
$ko$-theory by inverting Bott periodicity.  We remark that in the topological
index theory argument of~\S\ref{subsec:3.1} we could have used~$ko$ in place
of~$KO$.}  In somewhat different terms~\cite{Se3}: the 0-space of the
connective spectrum~$ko$ is the classifying space of the symmetric monoidal
topological category of real vector spaces and isomorphisms.  In this section
we introduce other symmetric monoidal topological (multi-)categories and
their classifying connective spectra.

  \subsection{Some $ko$-modules}\label{subsec:4.1}

The nonzero homotopy groups of~$ko$ form the Bott song: 
  \begin{equation}\label{eq:29}
     \pi \mstrut _{0,1,2,\dots  }(ko) =
     \{\ZZ\,,\,\zt\,,\,\zt\,,\,0\,,\,\ZZ\,,\,0\,,\,0\,,\,0\,,\,\ZZ\,,\,\dots      \}.  
  \end{equation}
Just as with spaces, spectra have Postnikov towers and Postnikov
truncations.  For example, the Eilenberg-MacLane spectrum~$\Sigma ^2H\zt$
in~\eqref{eq:27} is the truncation of~$ko$ which just keeps~$\pi _2$.  We
introduce a richer truncation which keeps the first several homotopy groups
  \begin{equation}\label{eq:30}
     R := \pi _{\le4}ko = ko\langle 0\cdots 4 \rangle. 
  \end{equation}
and kills all higher homotopy groups.  This can be done~\cite{B} so that
$R$~is a \emph{ring} spectrum.  Downshifts of~$R$ have negative homotopy
groups, which we truncate by taking connective covers.  For example, we
denote the connective cover of~$\Sigma \inv R$ as~$R\inv $.  Just as we can
consider ordinary cohomology with coefficients in~$\RZ$, there is a
spectrum~$\RRZ$ which represents $R$-cohomology with coefficients in~$\RZ$.
The Postnikov truncation~$R$, its shifts, and their connective covers are all
\emph{module} spectra over the ring spectrum~$ko$.  We also introduce another
module spectrum~$E$, which we define below.  For reference we record here the
nonzero homotopy groups of these spectra:
  \begin{equation}\label{eq:31}
     \begin{tabular}{ c@{\hspace{2em}} c@{\hspace{2em}} c@{\hspace{2em}}
     c@{\hspace{2em}} c@{\hspace{2em}} c@{\hspace{2em}} c@{\hspace{2em}} c}
     \toprule & $R$&  
     $R\inv $ & $\RRZ^{-2}$ & $E$ & $R^{-2}$ & $\RRZ^{-3}$ & $E\inv$ \\
     \midrule \\[-8pt] 
     $\pi_ 4$ & $\ZZ$ &$0 $&$0 $&$0 $&$0 $&$0 $&$0 $\\[-8pt]\\ $\pi_ 3$ & 
     $0$ &$\ZZ $&$0 $&$0 $&$0 $&$0 $&$0 $\\[-8pt]\\ $\pi_ 2$ & $\zt$ &$0
     $&$\RZ$ &$\zt$ &$\ZZ$&$0 $&$0 $\\[-8pt]\\ $\pi_ 1$ & $\zt $ 
     &$\zt $ &$\zt$ & $\zt$ &$0$& $\RZ$ & $\zt$\\[-8pt]\\ $\pi_ 0$ & $\ZZ $
     &$\zt$ &$\zt$ & $\zmod8$ & $\zt$&$\zt$ & $\zt$\\[-8pt] \\ \bottomrule 
     \end{tabular} 
  \end{equation}
This chart also displays the cohomology groups of a point: for any
spectrum~$h$ and~$q\in \ZZ$ we have $h^{-q}(\pt)\cong \pi _qh$.

With the exception of the spectrum~$R$, the 0-space of each spectrum
in~\eqref{eq:31} can be realized as the classifying space of a symmetric
monoidal topological category.  (Perhaps that is true for~$R$ also, but we do
not know any such model.)  The objects in these categories are either lines
or algebras which are $\zt$-graded.  As per custom we use `super' as a
synonym for `$\zt$-graded'.  For example, real metrized superlines---that is,
real inner product spaces of dimension~1 with a $\zt$~grading---are the
objects of a symmetric monoidal category.  Morphisms are degree-preserving
isometries.  The monoidal structure is tensor product, and the symmetry
encodes the Koszul sign rule.  Every object is invertible under tensor
product---invertible vector spaces are lines---and every morphism is also
invertible, thus superlines form a \emph{Picard groupoid}.  It is easy to see
that there are two equivalence classes of objects---the even and odd
line---and that the automorphism group of any object is cyclic of order~2.
The classifying spectrum of this Picard groupoid appears in~\eqref{eq:31}:
it is~$E\inv $.  To prove that statement requires\footnote{given that $E$~has
already been defined!} checking the $k$-invariant between~$\pi _0$ and~$\pi
_1$.  Complex hermitian superlines also form a Picard groupoid.  The group of
equivalence classes of objects is~$\pi _0\cong \zt$ and the automorphism
group of any object is~$\pi _1\cong \RZ$, the latter realized via
exponentiation as the group~$\TT$ of unit norm complex numbers.  Now there
are two choices: we can use the continuous topology or the discrete topology
for the morphisms.  The classifying spectrum with the continuous topology
is~$R^{-2}$ and with the discrete topology it is~$\RRZ^{-3}$.  We term the
latter `flat' since over a space~$S$ the abelian group~ $\RRZ^{-3}(S)$
classifies flat super hermitian complex line bundles.

Fix a field~$k$.  There is a 2-category~$\sC$ whose objects are $k$-algebras,
whose 1-morphisms are bimodules, and whose 2-morphisms are intertwiners.
Invertible 1-morphisms are Morita equivalences, so $\sC$~is sometimes called
the Morita 2-category.  We obtain a Picard 2-groupoid by keeping only
invertible objects and morphisms.  A basic theorem asserts that the
invertible algebras are precisely the central simple algebras, and their
equivalence classes make up the \emph{Brauer group} of~$k$.  The $\zt$-graded
version was proved by Wall~\cite{Wa}; see also~\cite{De}.  Now assume $k=\RR$
or~$k=\CC$.  Just as one can introduce metrics on lines, so too we can
introduce ``metrics'' on invertible superalgebras and invertible
supermodules, and they are used implicitly in the sequel to cut down groups
of 2-automorphisms from~$\CC^\times $ to~$\TT$.  We remark that every central
simple superalgebra over~$\RR$ or~$\CC$ is Morita equivalent to a Clifford
algebra.

The following table summarizes the Picard groupoids of superlines and Picard
2-groupoids of invertible superalgebras and their classifying spectra:
  \begin{equation}\label{eq:32}
     \begin{tabular}{ @{\hspace{3em}}l@{\hspace{3em}} l}
     \toprule  \hspace{-1em}spectrum& Picard (2-)groupoid\\ \midrule \\[-8pt] 
     $R\inv $ & complex central simple superalgebras\\[-8pt]\\ 
     $\RRZ^{-2}$& flat complex central simple superalgebras \\[-8pt]\\ 
     $E$ & real central simple superalgebras\\[-8pt]\\ 
     $R^{-2}$ &complex superlines \\[-8pt]\\ 
     $\RRZ^{-3}$ &flat complex superlines \\[-8pt] \\ 
     $E^{-1}$ & real superlines\\[-8pt] \\ 
     \bottomrule \end{tabular} 
  \end{equation}
The third line can be taken as a definition of~$E$, but the other lines
require proofs, which are fairly routine checks of homotopy groups and
$k$-invariants. 

For the spectra which appear in~\eqref{eq:32} the generalized cohomology
groups of a space~$S$ are equivalence classes of bundles of superlines or
invertible superalgebras.  Thus, for example, $E\inv (S)$~is the abelian
group of real superline bundles up to equivalence.  Bundles of superalgebras,
however, do not suffice to realize all classes in $R\inv (S)$, for example.
We should allow replacement of~$S$ by a locally equivalent
groupoid~\cite[Appendix~A]{FHT} and take fiber bundles of invertible
superalgebras, glued together using fiber bundles of invertible supermodules
and invertible intertwiners.  In this paper we only encounter global bundles
of Clifford algebras, so do not need groupoid replacements.

  \subsection{Some maps between $ko$-modules}\label{subsec:4.2}

Define~$\eta $ as the nonzero element 
  \begin{equation}\label{eq:33}
     \eta \in R^{-1}(\pt)\cong ko^{-1}(\pt) 
  \end{equation}
and $\theta $ as a generator  
  \begin{equation}\label{eq:34}
     \theta \in E^0(\pt) 
  \end{equation}
of the cyclic group~$E^0(\pt)$ of order~8.  They can be represented by
Clifford algebras on a 1-dimensional vector space.  We use the same
symbols~$\eta ,\theta $ for multiplication by these elements.  Let
  \begin{equation}\label{eq:35}
     \beta \mstrut_{\ZZ} \:\RRZ^{-(q+1)}\longrightarrow R^{-q},\qquad q\in \ZZ, 
  \end{equation}
be the connecting homomorphism derived from the fiber sequence $R\to
R_{\RR}\to\RRZ$ of spectra.  Finally, there is a complexification map 
  \begin{equation}\label{eq:36}
     \gamma \:E^{-q}\longrightarrow \RRZ^{-(q+2)},\qquad q=0,1.
  \end{equation}
 
We interpret various multiplication and coboundary maps as geometric
realizations of functors between the Picard groupoids~\eqref{eq:32} and also
the symmetric monoidal groupoid of real vector spaces, whose classifying
spectrum is~$ko$.

  \begin{proposition}[]\label{thm:11}
 \ 

  \begin{enumerate}
 \item The assignment of the real Clifford algebra~$\Cliff(V)$ to a real
vector space~$V$ induces the spectrum map $\theta \:ko\to E$,
multiplication by~\eqref{eq:34}.

 \item The assignment of the complex Clifford algebra~$\Cliff^{\CC}(V)$ to a
real vector space~$V$ induces the spectrum map $\eta \:ko\to R $,
multiplication by~\eqref{eq:33}.

 \item The assignment of the complexification~$A_{\CC}$ to a real central
superalgebra~$A$ induces the spectrum map $\gamma \:E\to \RRZ^{-2}$
in~\eqref{eq:36}.

 \item The assignment of the complexification~$L_{\CC}$ to a real
superline~$L$ induces the spectrum map $\gamma \:E^{-1}\to \RRZ^{-3}$
in~\eqref{eq:36}.

 \item The spectrum map $\beta \mstrut_{\ZZ} \:\RRZ^{-2}\to R\inv $ forgets
the flat structure.

 \item The spectrum map $\beta \mstrut_{\ZZ} \:\RRZ^{-3}\to R^{-2}$ forgets
the flat structure.

 \item The Postnikov truncation $ko\inv \to \pi _{\le1}ko\inv
\xrightarrow{\;\cong \;} E\inv $ is multiplication by~$\theta $
in~\eqref{eq:34}.


  \end{enumerate}
  \end{proposition}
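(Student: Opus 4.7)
The plan is to realize each of the seven maps as the classifying spectrum map of an explicit symmetric monoidal functor between two Picard (2-)groupoids from the list~\eqref{eq:32}, and then pin down the resulting spectrum map by its effect on the relevant low-degree homotopy groups together with forced $k$-invariant compatibility. Since every spectrum in~\eqref{eq:31} is bounded with at most four nonzero homotopy groups, checking $\pi_0$ (and sometimes $\pi_1$) suffices in each case to identify the map uniquely.

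For (i) and (ii), I would use the symmetric monoidal isomorphism
\[
\Cliff(V \oplus W) \cong \Cliff(V) \hat\otimes \Cliff(W)
\]
(and its complex analogue) to realize $V \mapsto \Cliff(V)$ and $V \mapsto \Cliff^{\CC}(V)$ as symmetric monoidal functors from real inner product spaces to the Picard 2-groupoids of real and complex central simple superalgebras, respectively. On $\pi_0$ of the source $ko$, the class $[\RR]$ is sent by the first functor to $[\Cliff_1] = \theta$ by definition~\eqref{eq:34}, identifying the map with multiplication by $\theta$; the second functor sends the appropriate generator to $[\Cliff^{\CC}_1] = \eta$ by definition~\eqref{eq:33}, yielding multiplication by $\eta$. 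The $ko$-module structure then propagates these identifications to all higher cells, and the small number of nontrivial Postnikov stages in $E$ and $R$ make the extension forced.

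For (iii) and (iv), complexification $A \mapsto A_{\CC}$ (resp.\ $L \mapsto L_{\CC}$) is symmetric monoidal, and the complexified object canonically retains the real form it came from, which amounts to a flat structure in the sense of~\S\ref{subsec:4.1}. Therefore the induced spectrum map lands in the flat versions $\RRZ^{-2}$ (resp.\ $\RRZ^{-3}$), and it remains to verify that the map agrees with $\gamma$. I would check this on $\pi_0$: the generator $\theta$ of $\zmod 8 = \pi_0 E$ (resp.\ of $\zt = \pi_0 E^{-1}$) maps to the complexification $\Cliff^{\CC}_1$ with its real form, which is the nonzero class in $\pi_0 \RRZ^{-2} = \zt$ (resp.\ $\pi_0 \RRZ^{-3} = \zt$), matching the normalization of $\gamma$ in~\eqref{eq:36}. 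Clauses (v) and (vi) are then essentially tautological: $\beta_{\ZZ}$ is by construction the boundary of the fiber sequence $R \to R_{\RR} \to \RRZ$, and ``forgetting the flat structure'' on a complex superline or CSA is precisely that boundary map.

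For (vii), the universal property of Postnikov truncation reduces the statement to checking that $\theta \cdot : ko\inv \to E\inv$, obtained by shifting (i), induces isomorphisms on $\pi_0$ and $\pi_1$ — both groups being $\zt$ on each side of the identification $\pi_{\le 1} ko\inv \cong E\inv$. This follows from (i) and the Atiyah--Bott--Shapiro style identification underlying~\eqref{eq:32}. The main obstacle across all seven clauses is the bookkeeping of $k$-invariants, to ensure that the low-degree data really does determine each map; fortunately the Postnikov stages involved are short and the multiplicative structure of $ko$ forces the extensions, so these checks reduce to a finite collection of concrete Clifford algebra computations.
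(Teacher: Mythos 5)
The paper contains no proof of this proposition to compare against: \S\ref{sec:4} opens with ``We do not give proofs of the statements made here; we hope to provide them elsewhere,'' and the only guidance offered is the remark that the identifications in~\eqref{eq:32} ``require proofs, which are fairly routine checks of homotopy groups and $k$-invariants,'' that statement~(iii) may be taken as the \emph{definition} of~$\gamma$, and the sketch in Remark~\ref{thm:12} bearing on~(vii). Your outline is exactly the strategy the paper gestures at --- realize each map by an explicit symmetric monoidal functor of Picard (2-)groupoids, then identify the resulting spectrum map by low-degree homotopy and $k$-invariant bookkeeping --- so there is no divergence of approach to report, only an assessment of whether your sketch fills in the deferred details.

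It mostly does, but two points are asserted where they need to be earned. First, your $\pi_0$ computations silently invoke Wall's theorem that the graded Brauer groups of~$\RR$ and~$\CC$ are $\zmod8$ and $\zt$ with $\Cliff_1$ and $\Cliff^{\CC}_1$ as generators; that is the substantive input behind ``$[\Cliff_1]=\theta$'' and should be cited rather than treated as a definition. Second, and more seriously, the claim that checking $\pi_0$ (and $\pi_1$) ``suffices to identify the map uniquely'' is not automatic for maps out of~$ko$, which is not bounded above: a priori $[ko,E]$ could contain classes invisible on $\pi_0$. For~(i) this is rescued by an actual computation --- $H^1(ko;\zt)=H^2(ko;\zt)=0$ since $H^*(ko;\zt)\cong \mathcal{A}/\mathcal{A}(Sq^1,Sq^2)$ vanishes in degrees $1$ through $3$ --- or alternatively by showing the Clifford functor genuinely induces a map of $ko$-modules (so that it is determined by the image of~$1$); either route must be carried out, not merely invoked as ``the multiplicative structure of $ko$ forces the extensions.'' The analogous check is needed in~(ii). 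Items (iii)--(vi) are essentially definitional as you say, and your reduction of~(vii) to an isomorphism on $\pi_0$ and $\pi_1$ is correct once one knows $\theta$ acts invertibly there, which again rests on the $ko$-module identification of~$E$. None of these gaps looks fatal, but as written the proposal is a sketch at the same level of detail as the paper's own deferral rather than a complete proof.
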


\noindent
 For example, if $S$~is a space and $L\to S$ a real line bundle over~$S$ with
equivalence class~$[L]\in E\inv (S)$, then (iv)~asserts that $\gamma [L]\in
\RRZ^{-3}(S)$ is the equivalence class of the complexification $L_{\CC}\to
S$, which carries a natural flat structure, and (vi)~asserts that $\beta
\mstrut_{\ZZ} \gamma [L]\in R^{-2}(S)$ is the equivalence class of the
complex line bundle $L_{\CC}\to S$ if we disregard the flat structure.  We
remark that statement~(iii) can be used as the definition of the map~$\gamma
$.

  \begin{remark}[]\label{thm:12}
 There is one more theorem of this kind which is relevant here.  According
to~\cite{ABS} elements of~$ko\inv (\pt)$ are represented by supermodules over
the Clifford algebra~$\Cliff(\RR)$ with a single generator of square~$-1$.
To such a module~$W^0\oplus W^1$ we assign the real superline~$\Det (W^0)^*$,
which is even or odd according to $\dim W^0\pmod2$.  This induces a map
$ko^{-1}\to E^{-1}$.  The theorem is that this map is the one in~(vii).  This
construction relates~(vii) to the Pfaffian superline bundle of a family of
Dirac operators on 1-manifolds, if we use the Clifford linear Dirac
operator~\cite{LM}.  
  \end{remark}

   \section{Supersymmetric QM with a general target}\label{sec:5}

We revisit anomalies in supersymmetric~QM, only now we drop
Assumption~\ref{thm:2}.  Thus the target~$X$ is an arbitrary Riemannian
manifold.  Supersymmetric~QM is still defined; the fields and
lagrangian~\eqref{eq:9} are unchanged.  It still makes sense to integrate out
the fermionic field~$\psi $ to obtain a relative theory.  In this section we
identify the anomaly theory~$\alpha $, which is an invertible extended
2-dimensional topological field theory. 
 
It is easiest to begin with the hamiltonian anomaly, which is the value
of~$\alpha $ on a point.  The discussion in~\S\ref{subsec:3.2} carries over:
the space of classical solutions is still the supersymplectic
manifold~\eqref{eq:15}.  However, if $X$~is odd dimensional there is no
polarization and if $X$~is not oriented there is no oriented polarization.
Instead we consider quantization from the operator algebra viewpoint.
Namely, at each point of~$X$ the operator algebra in the fermionic system
with field~$\psi $ is the complex Clifford algebra~$\Cliff^{\CC}(T_xX)$.  In
the family of fermionic system parametrized by constant maps~$\phi $
into~$X$, the family of operator algebras is the bundle~\eqref{eq:19} of
complex Clifford algebras.  A quantization is a complex vector bundle ~$E\to
X$ and an isomorphism of $\Cliff^{\CC}(TX)\to X$ with the bundle of
endomorphisms $\End(E)\to X$ Furthermore, the Riemannian metric on~$X$
induces a metric structure\footnote{We alluded to this type of metric
structure before~\eqref{eq:32}, but did not define it.  It is something we
expect in a \emph{unitary} quantum field theory.} which in this case is flat
and in fact is induced from the bundle~\eqref{eq:22} of real Clifford
algebras.  Applying~\eqref{eq:32} and Proposition~\ref{thm:11} we conclude
that the equivalence class of the \emph{flat} bundle~\eqref{eq:19} of complex
central simple superalgebras is
  \begin{equation}\label{eq:37}
     [\Cliff^{\CC}(TX)] = \gamma \theta [TX]\quad \in \RRZ^{-2}(X). 
  \end{equation}
This is hamiltonian anomaly: the obstruction to finding the vector bundle
$E\to X$.
 
We can also analyze the lagrangian anomaly.  The real Pfaffian line
bundle~\eqref{eq:10} is defined as in~\S\ref{subsec:3.1}, but now it is
$\zt$-graded by the mod~2 index.  (Under Assumption~\ref{thm:2} the Pfaffian
line bundle is even, so we did not encounter the $\zt$-grading previously.)
The following result is expressed in terms of transgression
using~\eqref{eq:38}.  Recall that we are studying a family of real
skew-adjoint Dirac operators on a spin 1-manifold~$M$.

  \begin{theorem}[]\label{thm:13}
 The topological equivalence class in $E^{-1}\bigl(\sF'(M)\bigr)$ of the
lagrangian anomaly\newline $\Pfaff D\to\sF'(M)$ is $\theta \,(\pi
_1)_*e^*[TX]$.
  \end{theorem}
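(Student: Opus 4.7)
The plan is to combine the family Atiyah--Singer index theorem in connective $ko$-theory with the geometric interpretation in Remark~\ref{thm:12} of the map $\theta\colon ko^{-1}\to E^{-1}$ as the Pfaffian superline bundle construction for Clifford-linear Dirac operators. This is the natural upgrade of the proof of Theorem~\ref{thm:3}: there we used the truncation of $KO$ to mod~$2$ cohomology, whereas here (following Remark~\ref{thm:5}) we need the richer truncation encoded in $E$.

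First I would organize the family of real skew-adjoint Dirac operators $D_\phi$ on $M$ with coefficients in $\phi^*TX$ as a single Clifford-linear Dirac operator associated to the family of spin $1$-manifolds $\pi_1\colon\sF'(M)\times M\to\sF'(M)$ with coefficient bundle $e^*TX$. The $KO$-valued family index theorem of Atiyah--Singer identifies its analytic index class in $ko^{-1}\bigl(\sF'(M)\bigr)$ with the topological pushforward $(\pi_1)_* e^*[TX]$, where the pushforward along the spin $1$-manifold fibers uses the canonical spin orientation of $ko$.

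Second I would apply the map $\theta\colon ko^{-1}\to E^{-1}$ of Proposition~\ref{thm:11}(vii). By Remark~\ref{thm:12}, this map is realized on Clifford-linear Dirac operators as the real Pfaffian superline bundle, with the $\zt$-grading given by the parity of the mod~$2$ index. Applied to our family it produces precisely $[\Pfaff D]$, giving $[\Pfaff D]=\theta\,(\pi_1)_* e^*[TX]$ in $E^{-1}\bigl(\sF'(M)\bigr)$. Independence of the metric on~$M$ follows because $E^{-1}$ is discrete so a continuous class cannot vary with the metric; independence of the spin structure on~$M$ can be checked on circles just as in the proof of Theorem~\ref{thm:3}, since the relevant $ko$-pushforward of a class pulled back from~$X$ factors through a spin bordism class that is the same for both spin structures on~$\cir$.

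The main obstacle is the verification in the second step: one must confirm that the \emph{analytically} defined Pfaffian superline bundle of the family of real skew-adjoint Dirac operators agrees, as an element of $E^{-1}\bigl(\sF'(M)\bigr)$, with the image under $\theta$ of the family index class. The cleanest route is universality: any such family is classified by its $ko^{-1}$-index class, so it suffices to verify the identification on a universal Clifford-linear Dirac operator representing the tautological class, which is precisely the content of Remark~\ref{thm:12}. As in Remark~\ref{thm:9}, pinning down the isomorphism on the nose (as opposed to merely the equivalence class) will require a universal choice, namely a path in $SO$ from $1$ to~$-1$; this is why the statement of Theorem~\ref{thm:13} is only an equality of topological equivalence classes.
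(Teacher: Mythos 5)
Your proposal is correct and follows essentially the same route as the paper: invoke the Atiyah--Singer family index theorem to identify $(\pi_1)_*e^*[TX]\in ko^{-1}\bigl(\sF'(M)\bigr)$ with the index of the family of Dirac operators, then use Proposition~\ref{thm:11}(vii) (amplified by Remark~\ref{thm:12}) to recognize the Pfaffian superline bundle as multiplication by~$\theta$. The extra material you supply---the Clifford-linear packaging, the universality argument, and the independence of background metric and spin structure---is consistent with the paper's surrounding discussion but not part of its (very terse) proof.
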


\noindent
 Here $[TX]\in ko^0(X)$ is the $ko$-theory class of the tangent bundle
of~$X$, and the pushforward~$(\pi _1)_*$ in $ko$-theory uses the spin
structure on~$M$. 

  \begin{proof}
 The Atiyah-Singer topological index theorem~\cite{AS2} identifies $(\pi
_1)_*e^*[TX]\in ko^{-1}\bigl(\sF'(M) \bigr)$ as the index of the family of
Dirac operators.  The Pfaffian line bundle is computed by the lowest 2-stage
Postnikov truncation of~$ko$, and Proposition~\ref{thm:11}(vii) implies that
it is computed as multiplication by~$\theta $.  
  \end{proof}

\noindent
 See Remark~\ref{thm:12} for a more direct relationship between the Pfaffian
line bundle and the $ko$-index. 

  \begin{remark}[]\label{thm:18}
 By Proposition~\ref{thm:11}(iv) the equivalence class of the flat complex
superline bundle obtained from the real Pfaffian superline bundle is 
  \begin{equation}\label{eq:46}
     \gamma \theta (\pi _1)_*e^*[TX]\quad \in \RRZ^{-3}\bigl(\sF'(M) \bigr). 
  \end{equation}
\end{remark}

Motivated by~\eqref{eq:37} and~\eqref{eq:46} we posit a direct definition of
the anomaly field theory~$\alpha $, as in~\S\ref{subsec:3.4}.  In this
general case we have already seen in Theorem~\ref{thm:13} that the spin
structure on ``spacetime''~$M$ enters, so we expect a theory on the bordism
2-category~$\Sbord_2(X)$ of 0-, 1-, and 2-dimensional spin manifolds equipped
with a map to~$X$.  The Madsen-Tillmann spectrum $\Sigma ^2MT\Spin_2\wedge
X_+$ is its geometric realization, which now replaces~\eqref{eq:26}, and we
let~$\alpha $ take values in the spectrum~$\Sigma ^{-2}\RRZ$ whose 0-space
classifies flat complex central simple superalgebras.  Analogous
to~\eqref{eq:27}, we define~ $\alpha $ as the composition
  \begin{equation}\label{eq:39}
     \alpha \:\Sigma ^2MT\Spin_2\wedge X_+\xrightarrow{\;\;\id\wedge [TX]\;\;}
     \Sigma ^2MT\Spin_2\wedge (ko\mstrut _0)_+ \xrightarrow{\;\;\theta \,\circ\,
     \textnormal{Thom}\;\;} E\xrightarrow{\;\;\gamma \;\;}     \Sigma ^{-2}\RRZ
  \end{equation}
We have chosen a map $X\to ko\mstrut _{0}$ into the 0-space of the $K$-theory
spectrum which represents~$[TX]\in ko^0(X)$.  The second map in~\eqref{eq:39}
is the composition of the Thom isomorphism in $ko$-theory for spin
bundles~\cite{ABS}, a projection map, and multiplication by~$\theta $.  Since
$\theta $~commutes with transgression---it is pulled back from a point---we
can rewrite~\eqref{eq:39} by first acting by~$\gamma \theta $ and then
applying the Thom isomorphism for the theory~$\RRZ$:
  \begin{equation}\label{eq:40}
     \alpha \:\Sigma ^2MT\Spin_2\wedge X_+\xrightarrow{\;\;\id\wedge
     \gamma \theta [TX]\;\;} \Sigma ^2MT\Spin_2\wedge ((\RRZ)\mstrut _{-2})_+
     \xrightarrow{\;\; \textnormal{Thom}\;\;} \Sigma ^{-2}\RRZ 
  \end{equation}

Suppose $W$~is a closed 2-manifold with spin structure~$\sigma $ and a smooth
map $\phi \:W\to X$.  We compute $\alpha (W,\sigma ,\phi )\in \RZ$.  The
map~$\gamma $ in~\eqref{eq:39} simply includes $E^{-2}(\pt)\cong
\zt\hookrightarrow \RZ\cong \RRZ^{-4}(\pt)$ when evaluated on a 2-manifold.
Furthermore, by the looping of Proposition~\ref{thm:11}(vii) the map~$\theta
$ does nothing in this case.  Hence we identify
  \begin{equation}\label{eq:43}
     \alpha (W,\sigma ,\phi ) = \pi ^W_* \phi ^*[TX]\quad \in
     ko^{-2}(\pt)\cong \zt, 
  \end{equation}
where $\pi ^W\:W\to\pt$ and we pushforward in $ko$-theory using the spin
structure on~$W$.  The Atiyah-Singer index theorem identifies this as the
mod~2 index of the Dirac operator on~$W$ coupled to~$\phi ^*TX$.  Denote
$\Arf_W(\sigma )= \pi ^W_*(1)\in \zt$, where $1\in ko^0(W)$ is the unit.
$\Arf_W$~is a quadratic function on spin structures~\cite{At3}.  Set
$w_q=w_q(X)$, $q=1,2$.

  \begin{proposition}[]\label{thm:16}
 $\alpha (W,\sigma ,\phi ) = (\dim X)\Arf_W(\sigma ) + \Arf_W(\sigma +\phi
^*w_1) + \Arf_W(\sigma ) + \langle \phi ^*w_2,[W]  \rangle$.
  \end{proposition}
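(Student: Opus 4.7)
The plan is to compute $\pi^W_*\phi^*[TX]\in ko^{-2}(\pt)\cong\zt$ by decomposing the class of $V:=\phi^*TX$ in $ko^0(W)$ into simple pieces and exploiting $ko^0(\pt)$-linearity of the pushforward. Let $L:=\det V$, a real line bundle, and $n:=\dim X$. I would write
\begin{equation*}
[V] \;=\; n\cdot\triv \;+\; \bigl([L]-\triv\bigr) \;+\; \xi,
\end{equation*}
where $\xi := [V] - [L] - (n-1)\triv$ is a virtual bundle of rank zero. A Whitney-formula calculation, truncated past degree two on the surface $W$, gives
\begin{equation*}
w(\xi) \;=\; \frac{1 + \phi^*w_1 + \phi^*w_2}{1 + \phi^*w_1} \;=\; 1 + \phi^*w_2,
\end{equation*}
so $w_1(\xi)=0$ and $w_2(\xi)=\phi^*w_2$.

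I would then push each summand forward separately. The trivial rank-$n$ bundle contributes $n\,\Arf_W(\sigma)$, by the definition of $\Arf_W$ as $\pi^W_*(1)$. For the real line bundle $L$, the standard identification of the Dirac operator twisted by $L$ with the Dirac operator on the shifted spin structure $\sigma+w_1(L)$ yields $\pi^W_*[L] = \Arf_W(\sigma+\phi^*w_1)$, and hence $\pi^W_*\bigl([L]-\triv\bigr) = \Arf_W(\sigma+\phi^*w_1) + \Arf_W(\sigma)$. For the rank-zero class $\xi$ with trivial $w_1$, I would invoke the bottom stratum of the Atiyah--Hirzebruch spectral sequence for $ko$-theory on the 2-manifold $W$: such a class is sent by $\pi^W_*$ to $\langle w_2(\xi),[W]\rangle = \langle \phi^*w_2,[W]\rangle$. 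Summing the four contributions produces exactly the claimed formula.

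The main obstacle is the identification of $\pi^W_*$ on this bottom stratum, i.e., the assertion that a virtual class of rank zero and trivial $w_1$ on the spin surface $W$ is pushed to $\langle w_2,[W]\rangle\in\zt$. For this I would reduce to the universal case $W=S^2$ with its bounding spin structure, where $\widetilde{ko}^0(S^2)\cong\zt$ is generated by the reduced real Hopf class (whose $w_2$ generates $H^2(S^2;\zt)$) and the pushforward realizes the Bott element generating $ko^{-2}(\pt)\cong\zt$; the general case then follows by pulling back along a degree-one map $W\to S^2$ and invoking naturality of the skeletal filtration together with $H^2(W;\zt)$-linearity of $\pi^W_*$ on the bottom stratum. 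Everything else is characteristic-class bookkeeping and the well-known interaction between real Dirac operators and real line bundles on spin surfaces.
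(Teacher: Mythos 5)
Your argument is correct and is essentially the paper's own proof: the same decomposition of $[\phi^*TX]$ into its rank, the class $[\det\phi^*TX]-1$, and a rank-zero remainder with vanishing $w_1$, with the line-bundle term handled by shifting the spin structure by $\phi^*w_1$ and the remainder evaluated by reducing to the bottom skeletal stratum over $S^2$. The only cosmetic difference is in that last reduction, where you pull back along a degree-one collapse map $W\to S^2$ while the paper takes the representing map into $K(\zt,2)$ to be trivial off a ball and invokes bordism invariance; these are the same excision argument.
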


  \begin{proof}
 Write $[TX] = \dim X + ([TX] - \dim X)$ to pick off the first term and
reduce to evaluating the $ko$-pushforward on a virtual bundle of rank zero.
That bundle can be written as $[L_{w}]-1$ plus a class~$z\in ko^0(W)$ of rank
zero with vanishing first Stiefel-Whitney class, where $L_w\to W$~is the real
line bundle with Stiefel-Whitney class $w=\phi ^*(w_1)$.  The pushforward
of~$[L_w]$ in spin structure~$\sigma $ equals the pushforward of~$1$ in spin
structure~$\sigma +w$.  This explains the middle two terms in the formula.
Finally, the class~$z$ can be represented by a map from~$W$ into the
2-skeleton of~$ko\mstrut _0$ which is trivial on the first Postnikov section,
so a map into~$K(\zt,2)$.  That map can be taken to be trivial off of a ball
in~$W$, and since the $ko$-pushforward of that class is easily seen to be
independent of spin structure, by the bordism invariance of the pushforward
we can replace~$W$ by a 2-sphere.  Now the pushforward is the suspension
isomorphism, and the last term in the formula results.
  \end{proof}

  \begin{remark}[]\label{thm:17}
 The cobordism hypothesis~\cite{L} asserts that the extended topological
field theory~$\alpha $ is determined by its value~\eqref{eq:37} on a point.
But the cobordism hypothesis is overkill for an invertible topological theory
as we can define it directly by specifying the map~\eqref{eq:40} (or
equivalently~\eqref{eq:39}) of spectra.
  \end{remark}

Finally, we discuss trivializations of the anomaly theory~$\alpha $.  As
described at the end of~\S\ref{subsec:3.3} a spin structure on~$X$
trivializes the hamiltonian anomaly as long as $X$~is even-dimensional.  That
still applies to~\eqref{eq:37}.  If $X$~is odd-dimensional and spin, then the
spin structure induces a Morita isomorphism of the flat bundle~\eqref{eq:19}
of complex Clifford algebras with the constant bundle whose fiber is the
complex Clifford algebra~$\Co=\Cliff^{\CC}(\RR)$ on a single generator.  We
can interpret this as saying that the bundle of Hilbert spaces over~$X$
obtained by quantizing~$\psi $ is naturally a bundle of~$\Co$-modules.
Furthermore, quantizing~$\phi $ we see that the Hilbert space of
supersymmetric~QM is also naturally a $\Co$-module.  Should we say that the
theory is anomalous, or allow that the Hilbert space of a quantum theory can
be a $\Co$-module?  I opt for the latter.

  \begin{remark}[]\label{thm:15}
 This fixes a well-known problem about fermions on an odd-dimensional
manifold.  For example, path integral arguments~\cite[p.~682]{Detal} suggest
that the dimension of the Hilbert space is an integer multiple of~$\sqrt{2}$
if $\dim X$~is odd.  We see here that the Hilbert space is naturally a
$\Co$-module, which resolves this puzzle with the path integral.
  \end{remark}

We can see directly from~\eqref{eq:40} the effect of a spin structure on~$X$
on the entire anomaly theory~$\alpha $.  A spin structure from this point of
view is a homotopy of the map $X\to E_0$ representing~$\theta [TX]$ to a
constant map into some component of the 0-space of~$E_0$.  Running this
homotopy through the composition~\eqref{eq:40} we obtain a homotopy
from~$\alpha $ to either (i)~the trivial theory if $\dim X$~is even, or (ii)~
a particularly simple 2-dimensional invertible extended topological
theory~$\alpha '$ if $\dim X$~is odd.  The theory~$\alpha '$ assigns the
Clifford algebra~$\Co$ to a point; the even or odd line to a spin circle,
depending on whether the spin structure bounds or not; and the Arf
invariant~$\Arf_W(\sigma )$ to a closed spin 2-manifold~$W$ with spin
structure~$\sigma $.  (See~\cite{G} for more on~$\alpha '$ together with an
interesting geometric application.)

\newcommand{\etalchar}[1]{$^{#1}$}
\providecommand{\bysame}{\leavevmode\hbox to3em{\hrulefill}\thinspace}
\providecommand{\MR}{\relax\ifhmode\unskip\space\fi MR }
\providecommand{\MRhref}[2]{%
  \href{http://www.ams.org/mathscinet-getitem?mr=#1}{#2}
}
\providecommand{\href}[2]{#2}

  \end{document}